\documentclass[11pt]{article}

\usepackage{wrapfig}
\usepackage{times}
\usepackage{citesort}
\usepackage{graphicx}
\usepackage{color}
\usepackage{fullpage}
\usepackage{amssymb}
\usepackage{amsmath}

\newcommand{\eps}{\epsilon}

\newtheorem{lemma}{Lemma}
\newtheorem{theorem}{Theorem}

\renewcommand{\H}{\mathcal{H}}

\newcommand{\F}{\mathcal{F}}

\title{Dynamic External Hashing: The Limit of Buffering}

\author{Zhewei Wei \and Ke Yi \and Qin Zhang}

\date{Hong Kong University of Science and Technology \\
Clear Water Bay, Hong Kong, China\\
$\{$wzxac, yike, qinzhang$\}$@cse.ust.hk}

\newenvironment{proof}{\trivlist\item[]\emph{Proof}:}%
{\unskip\nobreak\hskip 1em plus 1fil\nobreak$\Box$
\parfillskip=0pt%
\endtrivlist}

\begin{document}
\maketitle

\begin{abstract}
  Hash tables are one of the most fundamental data structures in computer
  science, in both theory and practice.  They are especially useful in
  external memory, where their query performance approaches the ideal cost
  of just one disk access.  Knuth \cite{k-ss-73} gave an elegant analysis
  showing that with some simple collision resolution strategies such as
  linear probing or chaining, the expected average number of disk I/Os of a
  lookup is merely $1+1/2^{\Omega(b)}$, where each I/O can read a disk
  block containing $b$ items.  Inserting a new item into the hash table
  also costs $1+1/2^{\Omega(b)}$ I/Os, which is again almost the best one
  can do if the hash table is entirely stored on disk.  However, this
  assumption is unrealistic since any algorithm operating on an external
  hash table must have some internal memory (at least $\Omega(1)$ blocks)
  to work with.  The availability of a small internal memory buffer can
  dramatically reduce the amortized insertion cost to $o(1)$ I/Os for many
  external memory data structures.  In this paper we study the inherent
  query-insertion tradeoff of external hash tables in the presence of a
  memory buffer.  In particular, we show that for any constant $c>1$, if
  the query cost is targeted at $1+O(1/b^{c})$ I/Os, then it is not
  possible to support insertions in less than $1-O(1/b^{\frac{c-1}{4}})$
  I/Os amortized, which means that the memory buffer is essentially
  useless.  While if the query cost is relaxed to $1+O(1/b^{c})$ I/Os for
  any constant $c<1$, there is a simple dynamic hash table with $o(1)$
  insertion cost.  These results also answer the open question recently
  posed by Jensen and Pagh \cite{jensen07:_optim}.
\end{abstract}

\newpage

\section{Introduction}

Hash tables are the most efficient way of searching for a particular item
in a large database, with constant query and update times.  They are
arguably one of the most fundamental data structures in computer science,
due to their simplicity of implementation, excellent performance in
practice, and many nice theoretical properties.  They work especially well
in external memory, where the storage is divided into disk blocks, each
containing up to $b$ items.  Thus collisions happen only when there are
more than $b$ items hashed into the same location.  Using some common
collision resolution strategies such as linear probing or chaining, the
expected average cost of a successful lookup of an external hash table is
merely $1+1/2^{\Omega(b)}$ disk accesses (or simply {\em I/O}s), provided
that the {\em load factor}\footnote{The load factor is defined to be ratio
  between the minimum number of blocks required to store $n$ data records,
  $\lceil n/b \rceil$, and the actual number of blocks used by the hash
  table.}  $\alpha$ is less than a constant smaller than 1. The expectation
is with respect to the random choice of the hash function, while the
average is with respect to the uniform choice of the queried item.  An
unsuccessful lookup costs slightly more, but is the same as that of a
successful lookup if ignoring the constant in the big-Omega.  Knuth
\cite{k-ss-73} gave an elegant analysis deriving the exact formula for the
query cost, as a function of $\alpha$ and $b$.  As typical values of $b$
range from a few hundreds to a thousand, the query cost is extremely close
to just one I/O; some exact numbers are given in \cite[Section
6.4]{k-ss-73}.

Inserting or deleting an item from the hash table also costs
$1+1/2^{\Omega(b)}$ I/Os: We simply first read the target block where the
new item should go, then write it back to disk\footnote{Rigorously
  speaking, this is $2+1/2^{\Omega(b)}$ I/Os, but since disk I/Os are
  dominated by the seek time, writing a block immediately after reading it
  can be considered as one I/O. }.  If one wants to maintain the load
factor we can periodically rebuild the hash table using schemes like {\em
  extensible hashing} \cite{fagin:extendible} or {\em linear hashing}
\cite{litwin80:_linear}, but this only adds an extra cost of $O(1/b)$ I/Os
amortized.  Jensen and Pagh \cite{jensen07:_optim} demonstrate how to
maintain the load factor at $\alpha = 1-O(1/b^{\frac{1}{2}})$ while still
supporting queries in $1+O(1/b^{\frac{1}{2}})$ I/Os and updates in
$1+O(1/b^{\frac{1}{2}})$ I/Os.  Indeed, one cannot hope for lower than 1
I/O for an insertion, if the hash table must reside on disk entirely and
there is no space in main memory for buffering.  However, this assumption
is unrealistic, since an algorithm operating on an external hash table has
to have at least a constant number of blocks of internal memory to work
with.  So we must include a main memory of size $m$ in our setting to model
the problem more accurately.  In fact, this is exactly what the standard
external memory model \cite{aggarwal:input} depicts: The system has a disk
of infinite size partitioned into blocks of size $b$, and a main memory of
size $m$.  Computation can only happen in main memory, which accesses the
disk via I/Os.  Each I/O can read or write a disk block, and the complexity
is measured by the number of I/Os performed by an algorithm.  The presence
of a no-cost main memory changes the problem dramatically, since it can be
used as a buffer space to batch up insertions and write them to disk
periodically, which could significantly reduce the amortized insertion
cost.  The abundant research in the area of I/O-efficient data structures
has witnessed this phenomenon numerous times, where the insertion cost can
be typically brought down to only slightly larger than $O(1/b)$ I/Os.
Examples include the simplest structures like stacks and queues, to more
advanced ones such as the buffer tree \cite{arge:buffer} and the priority
queue \cite{fadel:external,arge:oblivious}.  Many of these results hold as
long as the buffer has just a constant number of blocks; some require a
larger buffer of $\Theta(b)$ blocks (known as the {\em tall cache}
assumption).  Please see the surveys \cite{vitter:iosurvey,arge:eefnotes}
for a complete account of the power of buffering.

Therefore the natural question is, can we (or not) lower the insertion cost
of a dynamic hash table by buffering without sacrificing its near-perfect
query performance?  Interestingly, Jensen and Pagh \cite{jensen07:_optim}
recently posed the same question, and conjectured that the insertion cost
must be $\Omega(1)$ I/Os if the query cost is required to be $O(1)$ I/Os.

\paragraph{Our results.}
In this paper, we confirm that the conjecture of Jensen and Pagh
\cite{jensen07:_optim} is basically correct but not accurate
enough. Specifically we obtain the following results.  Consider
any dynamic hash table that supports insertions in expected
amortized $t_u$ I/Os and answers a successful lookup query in
expected $t_q$ I/Os on average.  We show that if $t_q \le
1+O(1/b^{c})$ for any constant $c>1$, then we must have $t_u \ge
1-O(1/b^{\frac{c-1}{4}})$.  This is only an additive term of
$1/b^{\Omega(1)}$ away from how the standard hash table is
supporting insertions, which means that buffering is essentially
useless in this case.  However, if the query cost is relaxed to
$t_q \le 1+O(1/b^{c})$ for any constant $0<c<1$, we present a
simple dynamic hash table that supports insertions in $t_u =
O(b^{c-1}) = o(1)$ I/Os.  For this case we also present a matching
lower bound of $t_u = \Omega(b^{c-1})$.  Finally for the case $t_q
= 1+\Theta(1/b)$, we show a tight bound of $t_u=\Theta(1)$. Our
results are pictorially illustrated in Figure~\ref{fig:tradeoff},
from which we see that we now have an almost complete
understanding of the entire query-insertion tradeoff, and $t_q = 1
+ \Theta(1/b)$ seems to be the sharp boundary separating
effective and ineffective buffering.  We prove our lower bounds
for the three cases above using a unified framework in
Section~\ref{sec:lower-bound}.  The upper bound for the first case
is simply the standard hash table following \cite{k-ss-73}; we
give the upper bounds for the other two cases in
Section~\ref{sec:upper-bound}.

\begin{figure}[h]
\begin{center}
\includegraphics[width=8.5cm]{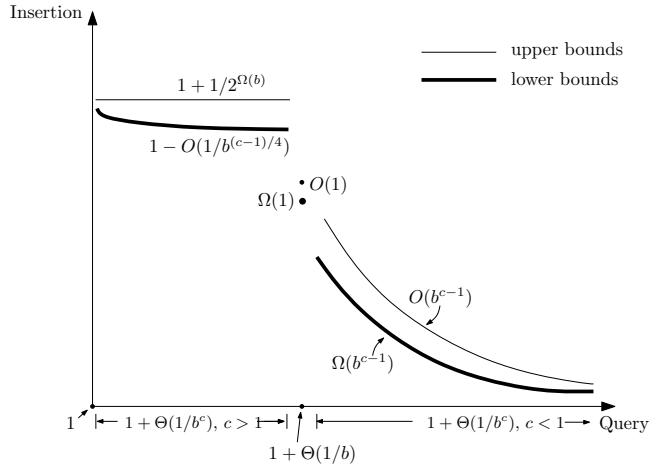}
\caption{The query-insertion tradeoff.}
\label{fig:tradeoff}
\end{center}
\end{figure}

In this paper we only consider the query-insertion tradeoff for
the following reasons.  First, our primary interest is on the
lower bound, a query-insertion tradeoff lower bound is certainly
applicable to the query-update tradeoff for more general updates
that include both insertions and deletions.  And secondly, there
tends to be a lot more insertions than deletions in many practical
situations like managing archival data.  For similar reasons we
only consider the query cost as that of a successful lookup.

Let $h(x)$ be a hash function that maps an item $x$ to a hash
value between $0$ and $u-1$.  In our lower bound construction, we
will insert a total of $n$ independent items such that each $h(x)$
is uniformly randomly distributed between $0$ and $u-1$, and we
prove a lower bound on the expected amortized cost per insertion,
under the condition that at any time, the hash table must be able
to answer a query for the already inserted items with the desired
expected average query bound.  Thus, our lower bound holds even
assuming that $h(x)$ is an ideal hash function that maps each item
to a hash value independently uniformly at random, a justifiable
assumption \cite{mitzenmachery08:_why_} often made in many works
on hashing.  Also note that since we use an input that is
uniformly at random, it is sufficient to consider only
deterministic algorithms as randomization will not help any more.

When proving our lower bound we make the only requirement that
items must be treated as atomic elements, i.e., they can only be
moved or copied between memory and disk in their entirety, and
when answering a query, the query algorithm must visit the block
(in memory or on disk) that actually contains the item or one of
its copies.  Such an {\em indivisibility} assumption is also made
in the sorting and permuting lower bounds in external memory
\cite{aggarwal:input}.  We assume that each machine word consists
of $\log u$ bits and each item occupies one machine word.  A block
has $b$ words and the memory stores up to $m$ words. We assume
that each block is not too small: $b > \log u$. Our lower and
upper bounds hold for the wide range of parameters
$\Omega\left(b^{1+2c}\right) < \frac{n}{m} < 2^{o(b)}$. Finally, we
comment that our lower bounds do not depend on the load factor,
which implies that the hash table cannot do better by consuming
more disk space.

\paragraph{Related results.}
Hash tables are widely used in practice due to their simplicity and
excellent performance.  Knuth's analysis \cite{k-ss-73} applies to the
basic version where $h(x)$ is assumed to be an ideal random hash function
and $t_q$ is the expected average cost.  Afterward, a lot of works have
been done to give better theoretical guarantees, for instance removing the
ideal hash function assumption \cite{CarWeg79}, making $t_q$ to be
worst-case \cite{fks-sstwc-84,dkmmrt-dphul-94,pagh04:_cuckoo_hashin}, etc.
Please see \cite{pagh02:_hashin} for a survey on hashing techniques.  Lower
bounds have been sparse because in internal memory, the update time cannot
be lower than $\Omega(1)$, which is already achieved by the standard hash
table.  Only with some strong requirements, e.g., when the algorithm is
deterministic and $t_q$ is worst-case, can one obtain some nontrivial lower
bounds on the update time \cite{dkmmrt-dphul-94}.  Our lower bounds, on the
other hand, hold for randomized algorithms and do not need $t_q$ to be
worst-case.

As commented earlier, in external memory there is a trivial lower bound of
1 I/O for either a query or an update, if all the changes to the hash table
must be committed to disk after each update.  However, the vast amount of
works in the area of external memory algorithms have never made such a
requirement.  And indeed for many problems, the availability of a small
internal memory buffer can significantly reduce the amortized update cost
without affecting the query cost
\cite{arge:buffer,fadel:external,arge:oblivious,arge:eefnotes,vitter:iosurvey}.
Unfortunately, little is known on the inherent limit of what buffering can
do.  The only nontrivial lower bound on the update cost of any external
data structure with a memory buffer is a paper by Fagerberg and Brodal
\cite{brodal03:_lower}, who gave a query-insertion tradeoff for the {\em
  predecessor} problem in a natural external version of the comparison
model, a model much more restrictive than the indivisibility model we use.
As assuming a comparison-based model precludes any hashing techniques,
their techniques are inapplicable to the problem we have at hand.  To the
best of our knowledge, no nontrivial lower bound on external hashing of any
kind is known.



\section{Lower Bounds}
\label{sec:lower-bound}
To obtain a query-insertion tradeoff, we start with an empty hash table and
insert a total of $n$ independent items such that $h(x)$ is uniformly
randomly distributed in $U=\{0,\dots, u-1\}$.  We will derive a lower bound
on $t_u$, the expected amortized number of I/Os for an insertion, while
assuming that the hash table is able to answer a successful query in $t_q$
I/Os on average in expectation after the first $i$ items have been
inserted, for all $i=1,\dots, n$.  We assume that all the $h(x)$'s are
different, which happens with probability $1-O(1/n)$ as long as $u>n^3$ by
the birthday paradox. In the sequel we will not distinguish between an item
$x$ and its hash value $h(x)$.  Under this setting we obtain the following
tradeoffs between $t_q$ and $t_u$.

\begin{theorem}
\label{thm:insert-query-tradeoff} For any constant $c>0$, suppose
we insert a sequence of $n> \Omega\left(m \cdot b^{1+2c}\right)$
random items into an initially empty hash table.  If the total
cost of these insertions is expected $n\cdot t_u$ I/Os, and the
hash table is able to answer a successful query in expected
average $t_q$ I/Os at any time, then the following tradeoffs hold:
\begin{enumerate}
\item If $t_q \le 1 + O(1/b^{c})$ for any $c > 1$, then $t_u \ge 1
- O(1/b^{\frac{c-1}{4}})$; 
\item If $t_q \le 1 + O(1/b)$, then $t_u \ge \Omega(1)$; 
\item If $t_q
  \le 1 + O(1/b^{c})$ for any $0< c < 1$, then $t_u \ge \Omega(b^{c-1})$.
\end{enumerate}
\end{theorem}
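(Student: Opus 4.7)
The plan is to lower-bound the total insertion I/Os by combining the query-cost bound with a memory-time budget and an analogous ``parking-time'' budget on items held on disk outside their home block, handling the three regimes of $c$ in one framework.

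I begin by translating the query-cost bound into structural information about the state. Since items residing in memory can be retrieved in $0$ I/Os while each disk item costs at least one I/O, an averaging argument applied to $t_q \le 1+O(1/b^c)$ shows that at every time $t$ the number of items on disk outside their ``home block''---the first disk block the deterministic query algorithm probes for the item, a function of the item and the current state---is at most $O(t/b^c)+O(m)$. In particular, at $t=n$ at least $n_1 = n(1-O(1/b^c))-O(m)$ items occupy their home block; this is the quantity that will drive the lower bound.

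Next, I charge each of these $n_1$ home items to the last write to its home block: if $k_B$ denotes the number of home items placed by the last write to $B$, then $\sum_B k_B \ge n_1$, $k_B\le b$, and Cauchy--Schwarz gives $W \ge n_1^2/\sum_B k_B^2$, so it remains to upper-bound $\sum_B k_B^2$. The $k_B$ items placed in a write to $B$ were all in memory at that moment, having arrived there either directly by in-memory buffering from their insertion, or indirectly via a trip through a non-home ``parking'' block on disk. Using the randomness of items (items for a fixed block arrive as a rate-$b/n$ stream), collecting $k$ items for $B$ by pure in-memory buffering over time contributes $\Theta(k^2 n/b)$ to memory-time; summing over blocks and invoking the memory-time budget $\int|M_t|\,dt \le mn$ gives $\sum_B (k_B^{\mathrm{mem}})^2 \le O(mb)$, while the analogous parking-time budget $\int|D^{\mathrm{nh}}_t|\,dt \le O(n^2/b^c)$ gives $\sum_B (k_B^{\mathrm{park}})^2 \le O(nb^{1-c})$. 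Combining, $\sum_B k_B^2 \le O(mb+nb^{1-c})$, so $W \ge \Omega(n_1^2/(mb+nb^{1-c}))$.

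Under the assumption $n/m > b^{1+2c}$, the parking term dominates the memory term, and case 3 ($0<c<1$) and case 2 ($c=1$) both fall out directly, yielding $t_u\ge \Omega(b^{c-1})$ and $t_u\ge \Omega(1)$ respectively. Case 1 ($c>1$) requires more care, since the naive read--then--write scheme already achieves $t_u\le 1+O(1/b)$, so the true lower bound there cannot exceed $1$; the reason the simple Cauchy--Schwarz argument fails in that regime is that it does not properly account for the many writes that are \emph{not} part of any batch and cannot be amortized against either budget. The stated slack $1-O(1/b^{(c-1)/4})$ strongly suggests refining the argument with a geometric partition of writes by their $k_B$-value, so that rare ``near-$b$'' batch writes are separated from the bulk of $k_B=1$ writes and handled by a union bound. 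The main obstacle will be formalizing the second step cleanly: because the algorithm may interleave destructive writes, read--then--write cycles, and park/un-park moves arbitrarily, and because a single I/O can legitimately play several roles at once (e.g., a read may simultaneously preserve old home items in $B$ and retrieve a parked item), the decomposition $k_B = k_B^{\mathrm{mem}}+k_B^{\mathrm{park}}$ must be defined so that no I/O is double-counted in both budgets, and the random-arrival lower bound on memory-time per write must be turned into a high-probability statement that survives worst-case over the algorithm's adaptive choices.
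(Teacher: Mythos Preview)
Your potential-function approach via memory-time and parking-time budgets is genuinely different from the paper's and attractive, but the two obstacles you flag at the end are not peripheral—they are the heart of the matter, and neither is resolved.

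The first is adaptivity. The claim that ``items for a fixed block arrive as a rate-$b/n$ stream'' holds only if the home-block assignment is oblivious to the input; here it is determined by the memory contents, which the algorithm controls adaptively. The algorithm could choose its final $f$ so that the items it batched into one block happened to arrive consecutively, making your $\Theta(k^2 n/b)$ memory-time charge arbitrarily wrong. The paper confronts exactly this point: it takes a union bound over all $2^{m\log u}$ functions computable from memory and shows (Lemma~\ref{lem:discard-bad-mapping}) that with high probability \emph{every} such $f$ has almost all its mass on small preimages. Without an analogous step your budget bounds are unsupported. The second gap is that your decomposition $k_B = k_B^{\mathrm{mem}}+k_B^{\mathrm{park}}$ is not merely at risk of double-counting but is incomplete: in the plain read--then--write scheme the last write to each block $B$ places $k_B = b$ home items, yet $b-1$ of them were already sitting in $B$ (their home block) and contribute zero to both budgets; hence $\sum_B k_B^2 = nb$, whereas your claimed bound is $O(mb+nb^{1-c})$, strictly smaller for every $c>0$. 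For $c>1$ this even drives your computed $t_u \ge \Omega(b^{c-1})$ above $1$, contradicting the trivial upper bound—so case~1 cannot be rescued by a geometric partition of $k_B$-values alone. The paper sidesteps both issues by partitioning the insertions into rounds of $s$ items and lower-bounding the number of distinct fast-zone blocks hit within a round by the cost of a bin-ball game (Lemmas~\ref{lem:binball-game} and~\ref{lem:binball-game-2}); this yields $(1-o(1))s$ I/Os per round for case~1 and $\Omega(1/\rho)$ per round for case~3 directly, with no Cauchy--Schwarz and no cross-round charging.
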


\paragraph{The abstraction.}
To abstractly model a dynamic hash table, we ignore any of its auxiliary
structures but only focus on the layout of items.  Consider any snapshot of
the hash table when we have inserted $k$ items.  We divide these $k$ items
into three zones.  The {\em memory zone} $M$ is a set of at most $m$ items
that are kept in memory.  It takes no I/O to query any item in $M$.  All
items not in $M$ must reside on disk.  Denote all the blocks on disk by
$B_1,B_2,\dots, B_d$.  Each $B_i$ is a set of at most $b$ items, and it is
possible that one item appears in more than one $B_i$.  Let $f:U
\rightarrow \{1,\dots,d\}$ be any function computable within memory, and we
divide the disk-resident items into two zones with respect to $f$ and the
set of blocks $B_1,\dots,B_d$.  The {\em fast zone} $F$ contains all items
$x$ such that $x \in B_{f(x)}$: These are the items that are accessible
with just one I/O.  We allocate all the remaining items into the {\em slow
  zone} $S$: These items need at least two I/Os to locate.  Note that under
random inputs, the sets $M, F, S, B_1,\dots, B_d$ are all random sets.

Any query algorithm on the hash table can be modeled as described, since
the only way to find a queried item in one I/O is to compute the index of a
block containing $x$ with only the information in memory.  If the
memory-resident computation gives an incorrect address or anything else, at
least 2 I/Os will be necessary.  Because any such $f$ must be computable
within memory, and the memory has $m\log u$ bits, the hash table can employ
a family $\F$ of at most $2^{m\log u}$ distinct $f$'s.  Note that the
current $f$ adopted by the hash table is dependent upon the already
inserted items, but the family $\F$ has to be fixed beforehand.

Suppose the hash table answers a successful query with an expected
average cost of $t_q = 1+\delta$ I/Os, where $\delta = 1/b^{c}$
for any constant $c>0$. Consider the snapshot of the hash table
when $k$ items have been inserted.   Then we must have $E[|F| + 2
\cdot |S|] / k \le 1+\delta$. Since $|F| + |S| = k - |M|$ and
$E[|M|] \le m$, we have
\begin{equation}
\label{eq:1} E[|S|] \le m + \delta k.
\end{equation}

We also have the following high-probability version of \eqref{eq:1}.
\begin{lemma}
\label{lem:small-query} Let $\phi \ge 1/b^{(c-1)/4}$ and let $k\ge
\phi n$. At the snapshot when $k$ items have been inserted, with
probability at least $1 - 2\phi$, $|S| \le m + \frac{\delta}{\phi}k$.
\end{lemma}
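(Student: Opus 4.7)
The plan is to bootstrap the expectation bound in \eqref{eq:1} into a tail bound via a one-line application of Markov's inequality, with the only real work being to verify that the given parameter regime makes the resulting ratio fall below $2\phi$.

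First, I would note that $|S|$ is a nonnegative random variable with $E[|S|] \le m + \delta k$ by \eqref{eq:1}. Applying Markov's inequality at the threshold $m + (\delta/\phi) k$ gives
\[
\Pr\!\left[|S| > m + \tfrac{\delta}{\phi}k\right] \;\le\; \frac{m + \delta k}{m + (\delta/\phi)k} \;=\; \phi \cdot \frac{m + \delta k}{m\phi + \delta k}.
\]
So it suffices to show that the second factor is at most $2$, i.e., that $m \le \delta k$ in the present regime, since then the numerator is at most $2\delta k$ while the denominator is at least $\delta k$.

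The second step is the parameter check. From the hypotheses $k \ge \phi n$ and $n \ge \Omega(m b^{1+2c})$ together with $\delta = 1/b^{c}$ and $\phi \ge 1/b^{(c-1)/4}$, I would compute
\[
\delta k \;\ge\; \delta \phi n \;\ge\; \frac{1}{b^{c}} \cdot \frac{1}{b^{(c-1)/4}} \cdot \Omega(m b^{1+2c}) \;=\; \Omega\!\left(m \cdot b^{1 + c - (c-1)/4}\right),
\]
and since the exponent $1 + c - (c-1)/4$ is positive for all $c > 0$, the right-hand side is $\gg m$ for $b$ large enough, giving $m \le \delta k$ as required. Plugging this into the Markov bound yields $\Pr[|S| > m + (\delta/\phi) k] \le 2\phi$, which is the complement of the statement.

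I do not expect any serious obstacle here; the argument is essentially a tail inequality on top of the already-established expectation bound \eqref{eq:1}. The only thing that needs care is the arithmetic of the exponents showing that the hypothesis $n \ge \Omega(mb^{1+2c})$ is exactly what is needed to absorb the $m$ term, and in particular that the slightly unusual threshold $\phi \ge 1/b^{(c-1)/4}$ is comfortably enough for the product $\delta \phi n$ to dominate $m$. Everything else (the lemma's use of $\phi$ rather than a generic deviation parameter) is forced by how this lemma will later be combined with the other high-probability statements in the lower-bound framework.
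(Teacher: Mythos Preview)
Your proof is correct and is essentially the same argument as the paper's: both are Markov-type bounds that ultimately hinge on the parameter inequality $\delta k \ge m$ (equivalently $n/m > 1/(\phi\delta)$), which you verify exactly as needed. The only cosmetic difference is that the paper phrases the Markov step in terms of the average query cost (using the deterministic lower bound $q \ge 1 - m/k$) and then translates back to $|S|$, whereas you apply Markov directly to $|S|$ using \eqref{eq:1}; your version is if anything slightly cleaner.
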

\begin{proof}
  On this snapshot the hash table answers a query in
  expected average $1+\delta$ I/Os.  We claim that with probability at most
  $2\phi$, the average query cost is more than $1+\delta/\phi$.  Otherwise,
  since in any case the average query cost is at least $1-m/k$ (assuming
  all items not in memory need just one I/O), we would have an expected
  average cost of at least
\[
(1 - 2\phi)(1 - m/k) + 2\phi \cdot (1 + \delta / \phi)
> 1 + \delta,
\]
provided that $\frac{n}{m} > \frac{1}{\phi \delta}$, which
is valid since we assume that $\frac{n}{m} > b^{1 + 2c}$.  The
lemma then follows from the same argument used to derive (\ref{eq:1}).
\end{proof}

\paragraph{Basic idea of the lower bound proof.}
For the first $\phi n$ items, we ignore the cost of their insertions.
Consider any $f: U \rightarrow \{1,\dots, d\}$.  For $i=1,\dots,d$, let
$\alpha_i = |f^{-1}(i)|/u$, and we call $(\alpha_1,\dots,\alpha_d)$ the
{\em characteristic vector} of $f$.  Note that $\sum_i \alpha_i = 1$. For
any one of the first $\phi n$ items, since it is randomly chosen from $U$,
$f$ will direct it to $B_i$ with probability $\alpha_i$. Intuitively, if
$\alpha_i$ is large, too many items will be directed to $B_i$.  Since $B_i$
contains at most $b$ items, the extra items will have to be pushed to the
slow zone.  If there are too many large $\alpha_i$'s, $S$ will be large
enough to violate the query requirement.  Thus, the hash table should use
an $f$ that distributes items relatively evenly to the blocks.  However, if
$f$ evenly distributes the first $\phi n$ items, it is also likely to
distribute newly inserted items evenly, leading to a high insertion cost.
Below we formalize this intuition.

\medskip For the first tradeoff of Theorem~\ref{thm:insert-query-tradeoff},
we set $\delta = 1/b^{c}$. We also pick the following set of parameters
$\phi = 1/b^{(c-1)/4}, \rho = 2b^{(c+3)/4}/n, s = n/b^{(c+1)/2}$.  We will
use different values for these parameters when proving the other two
tradeoffs. Given an $f$ with characteristic vector $(\alpha_1, \ldots,
\alpha_d)$, let $D^f = \{i\ |\ \alpha_i > \rho\}$ be the collection of
block indices with large $\alpha_i$'s.  We say that the indices in $D^f$
form the {\em bad index area} and others form the {\em good index area}.
Let $\lambda_f = \sum_{i \in D^f} \alpha_i$. Note that there are at most
$\lambda_f/\rho$ indices in the bad index area. We call an $f$ with
$\lambda_f > \phi$ a {\em bad function}; otherwise it is a {\em good
  function}. The following lemma shows that with high probability, the hash
table should use a good function $f$ from $\F$.

\begin{lemma}
\label{lem:discard-bad-mapping} At the snapshot when $k$ 
items are inserted for any $k \ge \phi n$, the function $f$ used by the
hash table is a good function with probability at least
$1-2\phi-1/2^{\Omega(b)}$. 
\end{lemma}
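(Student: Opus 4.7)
The plan is a proof by contradiction: suppose that at the snapshot of $k$ items, with probability strictly greater than $2\phi + 1/2^{\Omega(b)}$ the hash table uses a bad function, and derive a conflict with Lemma~\ref{lem:small-query}. The strategy is to argue that for \emph{every} bad $f \in \F$, the layout it induces is forced (with overwhelming probability over the random inputs) to push so many items into $S$ that the bound $|S| \le m + \delta k/\phi$ from Lemma~\ref{lem:small-query} would be violated.

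Fix any bad $f \in \F$, i.e., one with $\lambda_f > \phi$. Over the random choice of the first $k$ items, the count $N^f$ of items mapped into the bad index area is $\mathrm{Binomial}(k,\lambda_f)$, so a multiplicative Chernoff bound gives $N^f \ge \tfrac{3}{4} k\lambda_f$ except with probability $e^{-\Omega(k\lambda_f)} \le e^{-\Omega(k\phi)}$. Since each block contains at most $b$ items and $|D^f| \le \lambda_f/\rho$, at most $b|D^f| \le b\lambda_f/\rho$ of the $N^f$ items can sit in $F$. The parameter choices $k \ge \phi n$ and $\rho = 2b^{(c+3)/4}/n$ yield $k\rho \ge 2\phi b^{(c+3)/4} = 2b$, so $b|D^f| \le \tfrac{1}{2}k\lambda_f$. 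Subtracting the at-most-$m$ items that $M$ can hold, I obtain that whenever $f$ is used and the Chernoff event holds, $|S| \ge \tfrac{1}{4}k\lambda_f - m \ge \tfrac{1}{4}k\phi - m$.

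Because $f$ is chosen adaptively from $\F$, I would take a union bound over all $|\F| \le 2^{m\log u}$ candidate functions. The tail bound $e^{-\Omega(k\phi)}$ has to dominate $2^{m\log u}$, and here is where the size hypothesis enters: $k\phi \ge \phi^2 n = n/b^{(c-1)/2}$, and $n/m = \Omega(b^{1+2c})$ together with $b > \log u$ gives $k\phi \ge m b^{(3c+3)/2} \gg m\log u$, so the union-bounded failure probability is $1/2^{\Omega(b)}$. Combining this with the $2\phi$ failure of Lemma~\ref{lem:small-query}, with probability at least $1 - 2\phi - 1/2^{\Omega(b)}$ both $|S| \le m + \delta k/\phi$ and (if a bad $f$ is used) $|S| \ge \tfrac{1}{4}k\phi - m$ must hold. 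Substituting $\delta = 1/b^c$ and $\phi = 1/b^{(c-1)/4}$ and using $n/m = \Omega(b^{1+2c})$, one verifies $\tfrac{1}{8}k\phi > 2m$ and $\tfrac{1}{8}k\phi > \delta k/\phi$ (the latter reduces to $b^{(c+1)/2} > 8$), making the two inequalities incompatible. The only escape is that no bad $f$ is used on this snapshot, which is precisely the claim.

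The main obstacle I anticipate is making the concentration bound strong enough to absorb the size of $\F$. The deviation $e^{-\Omega(k\phi)}$ has to beat $2^{m\log u}$, which is exactly why the input-size hypothesis $n/m = \Omega(b^{1+2c})$ and the word-size restriction $b > \log u$ are needed. The rest—verifying $k\rho \ge 2b$, the tail-vs-union-bound comparison, and the final two-sided bound on $|S|$—is routine arithmetic that must be tracked carefully through the exponents.
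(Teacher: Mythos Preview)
Your proposal is correct and follows essentially the same approach as the paper: fix a bad $f$, use Chernoff to show that $\Omega(\lambda_f k)$ items land in the bad index area, bound the capacity of that area by $b\lambda_f/\rho$, union-bound over the at most $2^{m\log u}$ functions in $\F$, and then contradict the bound $|S|\le m+\delta k/\phi$ from Lemma~\ref{lem:small-query}. The only differences are cosmetic constants ($\tfrac{3}{4}$ versus the paper's $\tfrac{2}{3}$) and that you spell out the exponent arithmetic more explicitly than the paper does.
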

\begin{proof}
  Consider any bad function $f$ from $\F$.  Let $X_j$ be the indicator
  variable of the event that the $j$-th inserted item is mapped to the bad
  index area, $j=1,\dots,k$.  Then $X = \sum_{j=1}^k X_j$ is the total
  number of items mapped to the bad index area of $f$.  We have $E[X] =
  \lambda_f k$.  By Chernoff inequality, we have
\[\Pr\left[X < \frac{2}{3}\lambda_f k \right] \le e^{-\frac{(1/3)^2 \lambda_f
    k}{2}} \le e^{-\frac{\phi^2 n}{18}},\] namely with probability at
least $1 - e^{-\frac{\phi^2 n}{18}}$, we have $X \ge
\frac{2}{3}\lambda_f k$.  Since the family $\F$ contains at most $2^{m\log
  u}$ bad functions, by union bound we know that with probability at least
$1 - 2^{m\log u} \cdot e^{-\frac{\phi^2 n}{18}} \ge 1 - 1/2^{\Omega(b)}$
(by the parameters chosen and the assumption that $n > \Omega(m
b^{1+2c}),b>\log u$), for all the bad functions in $\F$, we have $X \ge
\frac{2}{3}\lambda_f k$.

Consequently, since the bad index area can only accommodate $b \cdot
\lambda_f/\rho$ items in the fast zone, at least $\frac{2}{3}\lambda_f k -
b\lambda_f/\rho$ cannot be in the fast zone.  The memory zone can accept at
most $m$ items, so the number of items in the slow zone is at least
\[ |S| \ge \frac{2}{3}\lambda_f k - b\lambda_f/\rho - m > m +
\frac{\delta}{\phi} k.
\]
This happens with probability at least $1 - 1/2^{\Omega(b)}$, due to the
fact that $f$ is a bad function.  On the other hand,
Lemma~\ref{lem:small-query} states that $|S| \le m + \frac{\delta}{\phi}k$
holds with probability at least $1-2\phi$, thus by union bound $f$ is a
good function with probability at least $1-2\phi - 1/2^{\Omega(b)}$.
\end{proof}

\paragraph{A bin-ball game.}
Lemma~\ref{lem:discard-bad-mapping} enables us to consider only
those good functions $f$ after the initial $\phi n$
insertions. To show that any good function will incur a
large insertion cost, we first consider the following bin-ball
game, which captures the essence of performing insertions using a
good function.

In an {\em $(s, p, t)$ bin-ball game}, we throw $s$ balls into $r$ (for any
$r \ge 1/p$) bins independently at random, and the probability that any
ball goes to any particular bin is no more than $p$. At the end of the
game, an adversary removes $t$ balls from the bins such that the remaining
$s-t$ balls hit the least number of bins. The cost of the game is defined
as the number of nonempty bins occupied by the $s-t$ remaining balls.

We have the following two results with respect to such a game, depending on
the relationships among $s,p$, and $t$.

\begin{lemma}
\label{lem:binball-game} If $sp \le \frac{1}{3}$, then for any $\mu
> 0$, with probability at least $1 - e^{-\frac{\mu^2 s}{3}}$, the cost of
an $(s, p, t)$ bin-ball game is at least $(1 - \mu)(1 - sp)s - t$.
\end{lemma}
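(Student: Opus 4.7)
The plan is first to reduce the claim to a lower bound on the number of nonempty bins \emph{before} the adversary removes balls, and then to establish that lower bound by coupling a natural (but dependent) indicator sum with an independent-Bernoulli sum so that a standard Chernoff bound applies.

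For the reduction, observe that removing a single ball can cause a bin to become empty only if the ball was that bin's sole occupant, so $t$ removals eliminate at most $t$ nonempty bins. Hence if $N$ denotes the number of bins occupied by the $s$ thrown balls, the game's cost is at least $N-t$, and it suffices to show $N \ge (1-\mu)(1-sp)s$ with probability at least $1-e^{-\mu^2 s/3}$.

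To control $N$, I would fix any order on the balls and let $Z_j$ be the indicator of the event that ball $j$ lands in a bin hit by none of balls $1,\ldots,j-1$, so that $N=\sum_{j=1}^s Z_j$. Conditioning on the placement of the first $j-1$ balls, which occupy at most $j-1$ distinct bins, a union bound gives $\Pr[Z_j=1\mid \text{history}] \ge 1-(j-1)p$, using only the hypothesis that each individual bin receives ball $j$ with probability at most $p$. The $Z_j$'s are not independent, so I would couple them with independent Bernoullis $B_j$ of parameter $1-(j-1)p$, arranged so that $Z_j \ge B_j$ almost surely; this is a standard sequential construction enabled by the uniform conditional lower bound just proved. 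Writing $B := \sum_j B_j$, one gets $N \ge B$ and $E[B] = s-\binom{s}{2}p \ge (1-sp/2)s$.

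The finish is a multiplicative Chernoff bound on the independent sum $B$: under the hypothesis $sp\le 1/3$ we have $E[B] \ge 5s/6$, and the standard $\exp(-\mu^2 E[B]/2)$ tail yields a deviation probability at most $e^{-\mu^2 s/3}$. On the complementary event, $N\ge B\ge (1-\mu)(1-sp/2)s\ge (1-\mu)(1-sp)s$, which combined with the reduction $N-t \le \text{cost}$ proves the lemma. The main obstacle is the dependence among the $Z_j$'s: without the pointwise conditional lower bound $1-(j-1)p$ and the coupling it permits, one cannot appeal to Chernoff directly, and a crude Markov- or second-moment-style argument on the collision count would not deliver the required exponential tail. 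Everything else reduces to routine bookkeeping and checking the constants under $sp\le 1/3$.
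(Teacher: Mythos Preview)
Your argument is correct and follows essentially the same route as the paper: write the number of nonempty bins as $\sum_j Z_j$ where $Z_j$ indicates that the $j$-th ball lands in a fresh bin, dominate from below by an independent Bernoulli sum to which Chernoff applies, and then subtract $t$ for the adversary's removals. The only difference is cosmetic --- you use the sharper conditional bound $\Pr[Z_j=1\mid\cdot]\ge 1-(j-1)p$ whereas the paper uses the uniform $1-sp$ (so its dominating variables are i.i.d.), but both lead to the same tail estimate under $sp\le 1/3$.
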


\begin{proof}
  Imagine that we throw the $s$ balls one by one.  Let $X_j$ be the
  indicator variable denoting the event that the $j$-th ball is thrown into
  an empty bin.  The number of nonempty bins in the end is thus $X =
  \sum_{j = 1}^{s}X_j$. These $X_j$'s are not independent, but no matter
  what has happened previously for the first $j-1$ balls, we always have
  $\Pr[X_j=0] \le sp$.  This is because at any time, at most $s$ bins are
  nonempty.  Let $Y_j\ (1 \le j \le s)$ be a set of independent variables
  such that
\[ Y_i = \left\{
  \begin{array}{ll}
    0, & \textrm{with probability }
sp; \\
    1, & \textrm{otherwise.}
\end{array} \right.
\]
Let $Y = \sum_{j=1}^s Y_j$.  Each $Y_i$ is stochastically dominated by
$X_i$, so $Y$ is stochastically dominated by $X$.  We have $E[Y] = (1 -
sp)s$ and we can apply Chernoff inequality on $Y$:
$$\Pr\left[ Y < (1 - \mu)(1 - sp)s \right] < e^{-\frac{\mu^2(1 - sp)s}{2}}
< e^{-\frac{\mu^2 s}{3}}.$$
Therefore with probability at least $1 -
e^{-\frac{\mu^2 s}{3}}$, we have $X \ge (1 - \mu)(1 - sp)s$. Finally, since
removing $t$ balls will reduce the number of nonempty bins by at most $t$,
the cost of the bin-ball game is at least $(1 - \mu)(1 - sp)s - t$ with
probability at least $1 - e^{-\frac{\mu^2 s}{3}}$.
\end{proof}

\begin{lemma}
\label{lem:binball-game-2} If $s/2 \ge t$ and $s/2 \ge 1/p$, then
with probability at least $1 - 1/2^{\Omega(s)}$, the cost of an
$(s, p, t)$ bin-ball game is at least $1/(20p)$.
\end{lemma}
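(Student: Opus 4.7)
The plan is to prove the contrapositive: with probability at least $1-1/2^{\Omega(s)}$, no set $E$ of at most $k := \lceil 1/(20p)\rceil$ bins satisfies $n_E \ge s/2$. This suffices, since the adversary removes at most $t \le s/2$ balls, so at least $s/2$ balls survive; if they occupied fewer than $k$ bins, some set of fewer than $k$ bins would hold $\ge s/2$ balls, contradicting the claim. Hence the cost would be at least $k \ge 1/(20p)$.

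For a \emph{fixed} set $E$ of $k$ bins, each ball lands in $E$ with total probability at most $kp \le 1/20$, so $n_E$ is stochastically dominated by $\mathrm{Bin}(s,1/20)$, whose mean is $s/20$. A standard Chernoff bound therefore gives $\Pr[n_E \ge s/2] \le 2^{-c_0 s}$ for an absolute constant $c_0 > 0$.

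The main obstacle is the union bound, because the offending set $E$ is chosen \emph{after} the balls are thrown, and the total number of bins $r$ is unbounded. To bypass this I would enumerate over tuples of \emph{ball indices} rather than bin indices. If some $k' \le k$ bins collectively hold $\ge s/2$ balls, pick one representative ball in each of those nonempty bins to obtain indices $b_1 < \cdots < b_{k'}$ in $[s]$. Conditioning on the destinations $Y_{b_1},\ldots,Y_{b_{k'}}$, the remaining $s-k'$ balls each fall into the now-fixed set $\{Y_{b_1},\ldots,Y_{b_{k'}}\}$ with probability at most $k' p \le 1/20$, and the event requires at least $s/2 - k'$ of them to do so; since $1/p \le s/2$ gives $s/2 - k' \ge 19 s/40$, a Chernoff bound again produces $2^{-\Omega(s)}$ for each fixed index tuple.

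The last step is to check that the union-bound factor $\sum_{k' \le k} \binom{s}{k'} \le k \binom{s}{k} \le k(es/k)^k$ does not swamp the Chernoff exponent. Using $s/k = 20sp \ge 40$ from the hypothesis $s/2 \ge 1/p$, one has $\log_2 \binom{s}{k} = O\!\bigl(k \log (s/k)\bigr) = O\!\bigl(s \cdot \log(sp)/(sp)\bigr)$, which is a small constant fraction of $s$ since $\log(x)/x$ is bounded for $x \ge 2$. Combined with the Chernoff exponent $-c_0 s$, the total probability stays $2^{-\Omega(s)}$, completing the proof.
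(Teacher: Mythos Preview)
Your proof is correct and takes a genuinely different route from the paper's.  The paper first \emph{merges} bins so that every bin has probability in $[p/2,p]$; this forces the total number of bins down to at most $2/p$, and then one can afford a direct union bound over all pairs (subset of at most $1/(20p)$ bins, subset of $s/2$ balls), using the crude bound $(ip)^{s/2}$ for the probability that a fixed set of $s/2$ balls all land in a fixed set of $i$ bins.  Your approach sidesteps the merging step entirely: you union bound over $\le k$-element subsets of \emph{ball indices} (the ``representatives''), condition on their landing bins, and then apply Chernoff to the remaining $s-k'$ balls.  This is arguably cleaner, since it does not alter the bin structure and works directly for unbounded $r$; the price is a somewhat more delicate constant comparison at the end, where you must check that the entropy of $\binom{s}{k}$ loses to the Chernoff exponent.

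Two small points you should tighten: (i) since the cost is an integer, ``cost $<1/(20p)$'' gives $k'\le k-1<1/(20p)$, so your claimed bound $k'p\le 1/20$ does hold, but your text vacillates between ``at most $k$'' and ``fewer than $k$''; (ii) the final sentence should argue that $\log_2\binom{s}{k}\le k\log_2(es/k)$ is at most a \emph{specific} fraction of $s$ (using $s/k\ge 20$, say), strictly smaller than the Chernoff constant $c_0$ from $\Pr[\mathrm{Bin}(s,1/20)\ge s/2]$ --- merely saying ``$\log(x)/x$ is bounded'' does not by itself show the exponents have the right sign.  These are cosmetic; the argument is sound.
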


\begin{proof}
  In this case, the adversary will remove at most $s/2$ balls in the end.
  Thus we show that with very small probability, there exist a subset of
  $s/2$ balls all of which are thrown into a subset of at most $1/(20p)$
  bins.  Before the analysis, we merge bins such that the probability that
  any ball goes to any particular bin is between $p/2$ and $p$, and
  consequently, the number of of bins would be between $1/p$ to $2/p$. Note
  that such an operation will only make the cost of the bin-ball game
  smaller. Now this probability is at most
  $$\sum_{i=1}^{1/(20p)}\left({2/p \choose i} {s \choose s/2} \left(
      \frac{i}{1/p} \right)^{s/2} \right) \le 2 {2/p \choose 1/(20p)} {s
    \choose s/2} \left( \frac{1/(20p)}{1/p} \right)^{s/2} \le
  1/2^{\Omega(s)},$$
  hence the lemma.
\end{proof}

Now we are ready to prove the main theorem.

\begin{proof} (of Theorem~\ref{thm:insert-query-tradeoff})
  We begin with the first tradeoff. Recall that we choose the following
  parameters: $\delta = 1/b^{c}$, $\phi = 1/b^{(c-1)/4}, \rho =
  2b^{(c+3)/4}/n, s = n/b^{(c+1)/2}$. For the first $\phi n$ items, we do
  not count their insertion costs.  We divide the rest of the insertions
  into rounds, with each round containing $s$ items.  We now bound the
  expected cost of each round.
  
  Focus on a particular round, and let $f$ be the function used by the hash
  table at the end of this round.  We only consider the set $R$ of items
  inserted in this round that are mapped to the good index area of $f$,
  i.e., $R=\{ x \mid f(x) \not\in D^f\}$; other items are assumed to have
  been inserted for free.  Consider the block with index $f(x)$ for a
  particular $x$.  If $x$ is in the fast zone, the block $B_{f(x)}$ must
  contain $x$.  Thus, the number of distinct indices $f(x)$ for $x \in R
  \cap F$ is an obvious lower bound on the I/O cost of this round.  Denote
  this number by $Z = |\{f(x) \mid x \in R \cap F\}|$.  Below we will show
  that $Z$ is large with high probability.
  
  We first argue that at the end of this round, each of the following three
  events happens with high probability.
\begin{itemize}
\item $\mathcal{E}_1$: $|S| \le \delta n/\phi + m$;
\item $\mathcal{E}_2$: $f$ is a good function;
\item $\mathcal{E}_3$: For all good function $f \in \F$ and corresponding slow zones $S$ and
  memory zones $M$, $Z \ge (1 - O(\phi))s
  - t$, where $t=|S|+|M|$.
\end{itemize}

By Lemma~\ref{lem:small-query}, $\mathcal{E}_1$ happens with probability at
least $1 - 2\phi$.  By Lemma~~\ref{lem:discard-bad-mapping},
$\mathcal{E}_2$ happens with probability at least $1-2\phi -
1/2^{\Omega(b)}$.  It remains to show that $\mathcal{E}_3$ also happens
with high probability.

We prove so by first claiming that for a particular $f \in \F$ with probability at least $1 - e^{- 2\phi^2
  s}$, $Z$ is at least the cost of a $((1 - 2\phi)s, \frac{\rho}{1 -
  \lambda_f}, t)$ bin-ball game, for the following reasons:
\begin{enumerate}
\item Since $f$ is a good function, by Chernoff inequality, with
  probability at least $1 - e^{- 2 \phi^2 s}$, more than $(1 - 2\phi)s$
  newly inserted items will fall into the good index area of $f$, i.e.,
  $|R| > (1-2\phi)s$.

\item The probability of any item being mapped to any index in the good
  index area, conditioned on that it goes to the good index area, is no
  more than $\frac{\rho}{1 - \lambda_f}$.

\item Only $t$ items in $R$ are not in the fast zone $F$, excluding them
  from $R$ corresponds to discarding $t$ balls at the end of the bin-ball
  game.
\end{enumerate}

Thus by Lemma~\ref{lem:binball-game} (setting $\mu = \phi$), with
probability at least $1 - e^{- \frac{\phi^2 \cdot (1 - 2\phi)s}{3}} - e^{-
  2\phi^2 s}$, we have
\begin{eqnarray*}
Z &\ge&
(1 - \phi) \left(1 - (1 - 2\phi)s \cdot \frac{\rho}{1 - \lambda_f}
\right)(1 - 2\phi)s - t \\
&\ge& (1 - \phi) \left(1 - (1 - 2\phi)s \cdot \frac{\rho}{1 - \phi}
\right)(1 - 2\phi)s - t \ge \left(1 - O\left(
\phi\right)\right)s - t.
\end{eqnarray*}
Thus $\mathcal{E}_3$ happens  with probability at
least $1 - (e^{- \frac{\phi^2 \cdot (1 - 2\phi)s}{3}} + e^{-2\phi^2
  s})\cdot 2^{m\log u} = 1 - 2^{-\Omega(b)}$ (by the assumption that $n >
\Omega(m b^{1+2c})$ and $b > \log u$) by applying union bound on all good functions in $\F$.

Now we lower bound the expected insertion cost of one round.  By union
bound, with probability at least $1 - O(\phi) - 1/2^{\Omega(b)}$, all of
$\mathcal{E}_1, \mathcal{E}_2$, and $\mathcal{E}_3$ happen at the end of
the round.  By $\mathcal{E}_2$ and $\mathcal{E}_3$, we have $Z \ge \left(1
  - O\left(\phi\right)\right) s -t$.  Since now $t = |S| + |M|
\le \delta n/\phi + 2m = O\left( \phi s \right)$ by
$\mathcal{E}_1$, we have $Z \ge \left(1 -
  O\left(\phi\right)\right) s$.  Thus the expected cost of one
round will be at least
\[
\left(1 - O\left(\phi\right)\right)s \cdot \left(1 -
O(\phi) - 1/2^{\Omega(b)}\right) = \left(1 -
O\left(\phi\right)\right)s.
\]

Finally, since there are $(1 - \phi)n/s$ rounds, the expected amortized
cost per insertion is at least
$$\left(1 - O(\phi)\right)s \cdot (1 - \phi)n/s
\cdot 1/n = 1 - O\left(1/b^{\frac{c-1}{4}}\right).$$

\medskip For the second tradeoff, we choose the following set of
parameters: $\phi = 1/\kappa, \rho = 2\kappa b/n, s =
n/(\kappa^2b)$ and $\delta = 1/(\kappa^4 b)$ (for some constant $\kappa$
large enough). We can check that Lemma~\ref{lem:discard-bad-mapping} still
holds with these parameters, and then go through the proof above. We omit
the tedious details. Plugging the new parameters into the derivations we
obtain a lower bound $t_u \ge \Omega(1)$.

\medskip For the third tradeoff, we choose the following set of parameters:
$\phi = 1/8, \rho = 16b/n, s = 32n/b^{c}$ and $\delta = 1/b^{c}$.  We can
still check the validity of Lemma~\ref{lem:discard-bad-mapping}, and go
through the whole proof.  The only difference is that we need to use
Lemma~\ref{lem:binball-game-2} in place of Lemma~\ref{lem:binball-game},
the reason being that for our new set of parameters, we have $s\rho =
\omega(1)$ thus Lemma~\ref{lem:binball-game} does not apply. By using
Lemma~\ref{lem:binball-game-2} we can lower bound the expected insertion
cost of each round by $\Omega\left((1 - 2\phi) / (20\rho)\right)$, so the
expected amortized insertion cost is at least
$$\Omega\left(\frac{1 - 2\phi}{20\rho}\right) \cdot (1 - \phi)n/s \cdot 1/n
= \Omega(b^{c-1}),$$ as claimed.
\end{proof}


\section{Upper Bounds}
\label{sec:upper-bound}
In this section, we present a simple dynamic hash table that supports
insertions in $t_u = O(b^{c-1}) = o(1)$ I/Os amortized, while being able to
answer a query in expected $t_q = 1+O(1/b^{c})$ I/Os on average for any
constant $c<1$, under the mild assumption that $\log\frac{n}{m} = o(b)$.
Below we first state a folklore result by applying the logarithmic method
\cite{bentley:decomposable} to a standard hash table, achieving $t_u =
o(1)$ but with $t_q = \Omega(1)$.  Then we show how to improve the query
cost to $1+O(1/b^{c})$ while keeping the insertion cost at $o(1)$.  We also
show how to tune the parameters such that $t_u = \eps$ while $t_q=
1+O(1/b)$, for any constant $\eps>0$.

\paragraph{Applying the logarithmic method.}
Fix a parameter $\gamma\ge 2$.  We maintain a series of hash tables $\H_0,
\H_1, \dots$.  The hash table $\H_k$ has $\gamma^k \cdot \frac{m}{b}$
buckets and stores up to $\frac{1}{2}\gamma^{k} m$ items, so that its load
factor is always at most $\frac{1}{2}$.  It uses the $\log(\gamma^k
\cdot\frac{m}{b}) = k\log\gamma +\log \frac{m}{b}$ least significant bits
of the hash function $h(x)$ to assign items into buckets.  We use some
standard method to resolve collisions, such as chaining.  The first hash
table $\H_0$ always resides in memory while the rest stay on disk.

When a new item is inserted, it always goes to the memory-resident $\H_0$.
When $\H_0$ is full (i.e., having $\frac{1}{2}m$ items), we migrate all
items stored in $\H_0$ to $\H_1$.  If $\H_1$ is not empty, we simply merge
the corresponding buckets.  Note that each bucket in $\H_0$ corresponds to
$\gamma$ consecutive buckets in $\H_1$, and we can easily distribute the
items to their new buckets in $\H_1$ by looking at $\log \gamma$ more bits
of their hash values.  Thus we can conduct the merge by scanning the two
tables in parallel, costing $O(\gamma \cdot \frac{m}{b})$ I/Os at most.
This operation takes place inductively: Whenever $\H_k$ is full, we migrate
its items to $\H_{k+1}$, costing $O(\gamma^{k+1}\cdot\frac{m}{b})$ I/Os.
Then standard analysis shows that for $n$ insertions, the total cost is
$O(\frac{\gamma n}{b}\log\frac{n}{m})$ I/Os, or
$O(\frac{\gamma}{b}\log\frac{n}{m})$ amortized I/Os per insertion.
However, for a query we need to examine all the $O(\log_\gamma
\frac{n}{m})$ hash tables.

\begin{lemma}
\label{lem:log}
  For any parameter $\gamma\ge 2$, there is a dynamic hash table that
  supports an insertion in amortized $O(\frac{\gamma}{b}\log\frac{n}{m})$
  I/Os and a (successful or unsuccessful) lookup in expected average
  $O(\log_\gamma\frac{n}{m})$ I/Os.
\end{lemma}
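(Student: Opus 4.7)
The plan is to confirm both bounds using the layered structure and migration rules already sketched just above the lemma. Let $L = O(\log_\gamma(n/m))$ denote the number of levels that are ever nonempty after $n$ insertions; this follows from the geometric capacity schedule $\gamma^k\cdot\frac{m}{2}$ at level $k$.

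For the query bound, I would first note that each on-disk $\H_k$ maintains load factor at most $\frac{1}{2}$ by construction, so Knuth's analysis of a standard external hash table with chaining (or linear probing) gives expected $1+1/2^{\Omega(b)} = O(1)$ I/Os per lookup at that level, for both successful and unsuccessful queries. A full lookup must probe every level, since the queried item could live in any one of them. The top table $\H_0$ contributes no I/O because it resides in memory, and each of the remaining $L-1$ on-disk tables contributes $O(1)$ expected I/Os, which sums to the claimed $O(\log_\gamma(n/m))$ bound.

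For the insertion bound, I would amortize the merge cost level by level. A merge $\H_k\to\H_{k+1}$ is performed by a parallel linear scan: each bucket of $\H_k$ aligns with $\gamma$ consecutive buckets of $\H_{k+1}$, so routing items using $\log\gamma$ extra bits of the hash value costs $O(\gamma^{k+1}\cdot\frac{m}{b})$ I/Os (dominated by the size of $\H_{k+1}$; moreover the read of $\H_0$ is free since it is in memory). Such a merge is triggered only when $\H_k$ fills up, which happens once every $\Theta(\gamma^k m)$ insertions. Charging this cost uniformly across the triggering insertions contributes $O(\gamma/b)$ amortized per insertion per level, and summing over the $L$ levels yields $O\bigl(\frac{\gamma}{b}\log_\gamma(n/m)\bigr) = O\bigl(\frac{\gamma}{b}\log(n/m)\bigr)$ amortized I/Os per insertion, as claimed.

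The main subtlety — the closest thing to an obstacle in what is essentially a folklore argument — is ensuring that a cascade of merges triggered by a single insertion does not spoil the amortization. This is handled by charging each level's cost independently against its own disjoint stream of triggering events (level $k$ is triggered once per $\Theta(\gamma^k m)$ insertions), so the per-level bounds simply add. I would also verify that the parallel-scan implementation really does run in time linear in the combined table size, which uses the fact that buckets are stored contiguously on disk and the bit-extension refinement keeps items in natural bucket order during the sweep.
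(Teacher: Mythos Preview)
Your proposal is correct and follows essentially the same approach as the paper: the construction and merge costs you cite are exactly those given in the paragraph preceding the lemma, and your per-level charging argument is precisely the ``standard analysis'' the paper invokes without spelling out. Your intermediate bound $O(\frac{\gamma}{b}\log_\gamma(n/m))$ is in fact slightly sharper than the paper's stated $O(\frac{\gamma}{b}\log(n/m))$, but of course the latter contains the former for $\gamma\ge 2$.
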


\paragraph{Improving the query cost.}
Next we show how to improve the average cost of a successful query to $1 +
O(1/b^{c})$ I/Os for any constant $c<1$, while keeping the insertion cost
at $o(1)$.  The idea is to try to put the majority of the items into one
single big hash table. In the standard logarithmic method described above,
the last table may seem a good candidate, but sometimes it may only contain
a constant fraction of all items.  Below we show how to bootstrap the
structure above to obtain a better query bound.

Fix a parameter $2\le \beta \le b$.  For the first $m$ items inserted, we
dump them in a hash table $\widehat{\H}$ on disk.  Then run the algorithm
above for the next $m/\beta$ items.  After that we merge these $m/\beta$
items into $\widehat{\H}$.  We keep doing so until the size of
$\widehat{\H}$ has reached $2m$, and then we start the next round.
Generally, in the $i$-th round, the size of $\widehat{\H}$ goes from
$2^{i-1}m$ to $2^i m$, and we apply the algorithm above for every
$2^{i-1}m/\beta$ items.  It is clear that $\widehat{\H}$ always have at
least a fraction of $1-\frac{1}{\beta}$ of all the items inserted so far,
while the series of hash tables used in the logarithmic method maintain at
least a separation factor of 2 in the sizes between successive tables.
Thus, the expected average query cost is at most
\[ \left(1+1/2^{\Omega(b)}\right)\left(1\cdot\left(1-\frac{1}{\beta}\right)
  + \frac{1}{\beta}\left(2\cdot \frac{1}{2} + 3 \cdot \frac{1}{4} +
  \cdots\right)\right) = 1 + O(1/\beta).
\]

Next we analyze the amortized insertion cost.  Since the number of items
doubles every round, it is (asymptotically) sufficient to analyze the last
round.  In the last round, $\widehat{\H}$ is scanned $\beta$ times, and we
charge $O(\beta/b)$ I/Os to each of the $n$ items. The logarithmic method
is invoked $\beta$ times, but every invocation handles $O(n/\beta)$
different items.  From Lemma~\ref{lem:log}, the amortized cost per item is
still $O(\frac{\gamma}{b}\log\frac{n}{m})$ I/Os.  So the total amortized
cost per insertion is $O(\frac{1}{b}(\beta+\gamma \log\frac{n}{m}))$ I/Os.
Let the constant in this big-O be $c'$.  Then setting $\beta = b^c$ (or
respectively $\beta = \frac{\eps}{2c'}\cdot b$) and $\gamma = 2$ yields the
desired results, as long as $\log\frac{n}{m} = o(b)$.

\begin{theorem}
  For any constant $c<1, \eps>0$, there is a dynamic hash table that
  supports an insertion in amortized $O(b^{c-1})$ I/Os and a successful
  lookup in expected average $1+O(1/b^{c})$ I/Os, or an insertion in
  amortized $\eps$ I/Os and a successful lookup in expected average
  $1+O(1/b)$ I/Os, provided that $\log\frac{n}{m} = o(b)$.
\end{theorem}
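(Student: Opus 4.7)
(Proof proposal for the final theorem.)
The plan is to instantiate the construction described in the paragraph immediately preceding the theorem with two different choices of the parameters $(\beta,\gamma)$, and verify that each yields the claimed tradeoff. The structure is a single ``main'' hash table $\widehat{\H}$ on disk that absorbs items in doubling rounds, combined with a shadow structure run by the logarithmic method of Lemma~\ref{lem:log} that buffers every batch of $2^{i-1}m/\beta$ new items during round $i$ and is periodically merged into $\widehat{\H}$. I will treat $\beta$ and $\gamma$ as free parameters, derive the query and insertion costs as functions of them, and then plug in the two settings.

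For the query analysis, I would first observe that after round $i$ the main table $\widehat{\H}$ holds at least a $(1-1/\beta)$ fraction of all inserted items, because only the at most $2^{i-1}m/\beta$ items queued for the next merge sit outside of $\widehat{\H}$. A successful lookup either hits $\widehat{\H}$ directly (cost $1+1/2^{\Omega(b)}$ by Knuth's analysis, since $\widehat{\H}$ has constant load factor) or descends through the $O(\log_\gamma\frac{m/\beta}{m}) = O(\log_\gamma\beta)$ levels of the logarithmic shadow; the $k$-th level holds a $1/2^k$ fraction of the $1/\beta$-fraction of ``outside'' items. Summing $\sum_{k\ge 1}(k+1)\cdot 2^{-k}/\beta = O(1/\beta)$ and multiplying by the per-table $1+1/2^{\Omega(b)}$ factor yields the expected average query cost $1+O(1/\beta)$ stated in the text.

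For the insertion analysis I would charge the work of the last (asymptotically dominant) round to the $n$ items present then. The $\beta$ scans of $\widehat{\H}$ together cost $O(\beta n/b)$ I/Os, i.e.\ $O(\beta/b)$ amortized per item. The logarithmic structure is reinitialized and invoked $\beta$ times, each time processing $O(n/\beta)$ items; by Lemma~\ref{lem:log} each invocation costs $O(\frac{\gamma}{b}\log\frac{n}{m})$ per item, giving a total amortized contribution of $O(\frac{\gamma}{b}\log\frac{n}{m})$ per insertion. Summing, the amortized insertion cost is $O\bigl(\frac{1}{b}(\beta+\gamma\log\frac{n}{m})\bigr)$, with some absolute constant $c'$ hidden in the big-$O$.

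Finally I would set parameters. Taking $\gamma=2$ and $\beta=b^c$ with $c<1$, the hypothesis $\log\frac{n}{m}=o(b)$ ensures $\gamma\log\frac{n}{m}=o(b^c)=o(\beta)$, so the insertion cost is $O(\beta/b)=O(b^{c-1})$ and the query cost is $1+O(1/\beta)=1+O(1/b^c)$, proving the first half. Taking $\gamma=2$ and $\beta=\frac{\eps}{2c'}\cdot b$ instead, the same estimate gives insertion cost at most $\eps$ and query cost $1+O(1/b)$, proving the second half. The main technical point to be careful with is the $\log\frac{n}{m}=o(b)$ assumption, which is exactly what makes the $\gamma\log\frac{n}{m}$ term negligible compared to $\beta$ in both parameter settings; everything else is bookkeeping around the round-doubling structure.
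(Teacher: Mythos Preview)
Your proposal is correct and follows the paper's own argument essentially verbatim: the same two-level construction ($\widehat{\H}$ plus the logarithmic shadow), the same query bound $1+O(1/\beta)$ via the geometric series over levels, the same amortized insertion bound $O\bigl(\frac{1}{b}(\beta+\gamma\log\frac{n}{m})\bigr)$ from the dominant last round, and the same two parameter choices $\gamma=2$, $\beta=b^c$ or $\beta=\frac{\eps}{2c'}b$. The only slip is the expression $O(\log_\gamma\frac{m/\beta}{m})$ for the number of shadow levels, which is negative as written; but since you compute the query cost by summing the convergent series directly, this does not affect the argument.
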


\bibliographystyle{abbrv}


\end{document}